\algrenewcommand\algorithmicrequire{\textbf{Input:}}
\algrenewcommand\algorithmicensure{\textbf{Output:}}
\algnewcommand\algorithmicforeach{\textbf{for each}}
\newcommand{\cref}[1]{\S\ref{#1}}
\newcommand{\code}[1]{\texttt{#1}\xspace}
\newcommand*\fname{TimeClave}
\newcommand*\oramname{RoORAM}
\begin{document} 
\title{\fname{}: Oblivious In-enclave Time series Processing System}
\author{}
\institute{}

\author{Kassem Bagher\inst{1,3} \and
Shujie Cui\inst{1} \and
Xingliang Yuan\inst{1} \and
Carsten Rudolph\inst{1} \and
Xun Yi\inst{2}
}
\authorrunning{K. Bagher et al.}

\institute{Monash University,  Australia \and
RMIT University, Australia \and
Faculty of Computing and Information Technology, King Abdulaziz University,  Saudi Arabia}

\maketitle

\begin{abstract}
Cloud platforms are widely adopted by many systems, such as time series processing systems, to store and process massive amounts of sensitive time series data. Unfortunately, several incidents have shown that cloud platforms are vulnerable to internal and external attacks that lead to critical data breaches. Adopting cryptographic protocols such as homomorphic encryption and secure multi-party computation adds high computational and network overhead to query operations.

We present \fname{}, a fully oblivious in-enclave time series processing system: \fname{} leverages Intel SGX to support aggregate statistics on time series with minimal memory consumption inside the enclave. To hide the access pattern inside the enclave, we introduce a non-blocking read-optimised ORAM named \oramname{}. \fname{} integrates \oramname{} to obliviously and securely handle  client queries with high performance. With an aggregation time interval of $10s$, $2^{14}$ summarised data blocks and 8 aggregate functions, \fname{} run point query in $0.03ms$ and a range query of 50 intervals in $0.46ms$. Compared to the ORAM baseline, \fname{} achieves lower query latency by up to $2.5\times$  and up to $2\times$ throughput, with up to 22K queries per second.
\keywords{Time Series Processing \and
ORAM \and
Intel SGX.}
\end{abstract}

%%%%% Sections %%%%%

%%%%% Introduction %%%%%

\section{Introduction}\label{sec:introduction}
Time series data (TSD) are data points collected over repeated intervals, such as minutes, hours, or days. Unlike static data, TSD represents the change in value over time, and analysing it helps understand the cause of a specific pattern or trend over time. 
Studies have been conducted in various fields on TSD to build efficient time series systems, to name a few, healthcare~\cite{li2014physiological, liu2015efficient}, smart home \cite{li2014physiological, aminikhanghahi2018real}, and smart vehicles \cite{gupta2020early}. These systems continuously produce massive amounts of TSD that need to be stored and analysed in a timely manner \cite{netflix2018time, vasisht2017farmbeats, gupta2014bolt}, which cannot be efficiently met by general relational database management systems.

For this, time series databases (TSDB) have been designed and deployed on cloud platforms to provide a high ingest rate and faster insights \cite{andersen2016btrdb}, such as Amazon TimeStream \cite{amazontimestream} and InfluxDB \cite{influxdb}. Unfortunately, adopting plaintext TSDBs on cloud platforms to store and process this massive amount of sensitive TSD can lead to critical data breaches, as several incidents have shown that cloud platforms are vulnerable to internal and external attacks \cite{copper2014healthcare,information2020world,fortune500leak}.

One possible solution to protect TSD in the cloud is to adopt cryptographic protocols such as homomorphic encryption (HE) and secure multi-party computation (MPC) to securely store and process TSD \cite{pappas2014blind,popa2011cryptdb,fuller2017sok,vo2021shielddb,burkhalter2020timecrypt,dauterman2021waldo}. For example, TimeCrypt \cite{burkhalter2020timecrypt} adopts partial HE to provide real-time analytics for TSD, while Waldo \cite{dauterman2021waldo} adopts MPC in a distributed-trust setting. Unfortunately, those solutions have two key limitations. The first is the high computational and network communication cost. As demonstrated in Waldo \cite{dauterman2021waldo}, network communication adds up to $4.4\times$ overhead to query operations. Similarly, HE is orders of magnitude slower than plaintext processing \cite{tetali2013mrcrypt,poddar2018safebricks}. The second limitation is that cryptographic protocols, specifically HE, support limited functionalities such as addition and multiplication on integers \cite{poddar2018safebricks}, where performing complex computations on floating-point numbers adds significant overhead and gradually loses accuracy \cite{viand2021sok}. Furthermore, basic functionalities in time series systems (such as $max$ and $min$) require a secure comparison, which is a costly operation using HE and MPC  \cite{sathya2018review}.

A more practical solution is to adopt hardware-based approaches such as Intel SGX \cite{mckeen2013innovative} to process plaintext TSD in a secure and isolated environment in the cloud, i.e., an enclave. Inside the enclave, data is decrypted and processed in plaintext, allowing systems to securely perform arbitrary and complex computations on the data. However, processing TSD within the enclave is not straightforward due to the costly context switch and access pattern leakage. A context switch occurs when a CPU enters or exists the enclave, such as when the enclave requests encrypted data that reside outside the enclave. Several studies report that context switch is up to $50\times$ more expensive than a system call \cite{weichbrodt2018sgx,tian2018switchless}. 
Although such overhead can be minimised by adopting Switchless Calls techniques \cite{tian2018switchless}, the enclave has to validate the requested data to prevent the untrusted part from tampering with the query. In addition, the enclave needs to decrypt the requested data, which adds a costly overhead, as seen in previous work \cite{sun2021building}.

SGX-based solutions are also vulnerable to access pattern attacks, where an attacker performs a page table-based attack \cite{xu2015controlled,van2017telling,wang2017leaky,moghimi2020copycat} to observe which memory page is accessed inside the enclave. This leakage allows the attacker to infer which data are being accessed and when \cite{cash2015leakage,islam2012access,zhang2016all,van2017telling,xu2015controlled}. Several studies have shown that such information can recover search queries or a portion of encrypted records \cite{giraud2017practical,zhang2016all,liu2014search}.
A widely adopted approach to hide access patterns is to store encrypted data in an Oblivious RAM (ORAM).
Several solutions combine Intel SGX with ORAM to hide access patterns inside the enclave.
For example, Oblix \cite{mishra2018oblix} and ZeroTrace \cite{sasy2017zerotrace} deploy the ORAM controller securely inside the enclave while leaving the ORAM tree outside the enclave, which increases the communication between the enclave and the untrusted part. Such communication adds additional overhead to the system due to context switching and additional processing performed by the enclave (e.g., data decryption and validation), which degrades the system's performance.
Even when deploying the ORAM inside the enclave, these solutions adopt vanilla ORAMs, which are not optimised to handle non-blocking clients' queries that dominate the workload of read-heavy systems like TSDB. \\

Motivated by the above challenges, we ask the following question: \textit{How can we build an oblivious SGX-based system capable of securely storing TSD and answering clients' queries with high performance while hiding access patterns?}
\\

\noindent\textbf{\fname{}.} 
To answer this question, this paper presents \fname{}, an oblivious in-enclave time series processing system that efficiently stores and processes TSD inside the enclave. \fname{} resides entirely inside the enclave and adopts oblivious primitives to provide a fully oblivious in-enclave time series processing system. \fname{} supports oblivious statistical point and range queries by employing a wide range of aggregate and non-aggregate functions on TSD that are widely adopted in the area of time series processing \cite{amazontimestream,influxdb, timescale}, such as $sum$, $max$, and $stdv$. 

To efficiently protect against access pattern leakage inside the enclave, we introduce Read-optimised ORAM (\oramname{}), a non-blocking in-enclave ORAM. As time series systems need to store and query TSD simultaneously, most ORAMs fail to meet such requirements. The reason is that ORAMs can perform only one operation at a time, i.e. read or write. Regardless of the required operation, ORAM reads and writes data back to the ORAM tree to hide the operation type. As a result, clients' queries are blocked during a write operation and delayed during read operations, especially in read-heavy systems. An efficient approach to address the previous drawbacks is to decouple read and write operations to handle non-blocking client queries. Existing solutions follow different approaches to support non-blocking read operations and parallel access to the ORAM \cite{sahin2016taostore,chakraborti2018concuroram}. However, these solutions require either a proxy server to synchronise the clients with the cloud server \cite{sahin2016taostore} or require the client to maintain a locally-cached sub-tree that is continuously synchronised with the cloud server \cite{chakraborti2018concuroram}. Such a client or third-party server involvement degrades the system's performance and usability, especially in read-heavy systems like time series processing, which requires real-time and low-latency query processing.
\oramname{} adopts and improves PathORAM \cite{stefanov2018path} to handle client queries with better performance.  \oramname{} achieves this improvement by having separate ORAM trees, a read-only and a write-only tree, to decouple read- and write operations. Unlike previous solutions \cite{chakraborti2018concuroram,sahin2016taostore}, \oramname{} is an in-enclave and lightweight; hence, it does not require client involvement or a proxy server. \oramname{} adopts oblivious primitives inside the enclave to access the ORAM controller obliviously. 

To avoid the costly context switches, \fname{}  integrates \oramname{} to efficiently and obliviously store and access TSD inside the enclave, thus, eliminating excessive communication with the untrusted part. 
In detail, \fname{} leverages the fact that TSD is queried and aggregated in an approximate manner to store statistical summaries of pre-defined time intervals inside the enclave. 
By only storing summarised TSD, \fname{} reduces the size of the ORAM tree, reducing the consumption of enclave memory and the cost of ORAM access.
Such a design allows \fname{} to store and process a large amount of TSD data with low memory consumption while providing low-latency queries.

\noindent\textbf{Performance evaluation}
We implemented and evaluated \fname{} on an SGX-enabled machine with 8 cores and 128 GiB RAM \cref{sec:evaluation}. 
With different query ranges of 8 aggregate functions, \fname{} runs a point query in $0.03ms$ and a range query of 50 intervals in $0.46ms$. \fname{} achieves lower query latency of up to $2.5\times$ and up to $2\times$ higher throughout. Finally, \fname{} achieves up to $17 \times$ speed-up when inserting new data compared to the ORAM baseline.
%

%%%%% Background %%%%%

\section{Background}\label{sec:background}

\subsection{Intel SGX}\label{sec:intel_sgx}
Intel SGX is a set of commands that allow creating an isolated execution environment, called an \code{enclave}. Applications run their critical code and store sensitive data inside the enclave. The enclave is designed to protect sensitive code and data from a malicious host. 
No other process on the same CPU (except the one that created the enclave), even the privileged one (kernel or hypervisor), can access or tamper with the enclave.
SGX-based applications are divided into two parts: untrusted and trusted parts. The two parts communicate with each other using user-defined interface functions, i.e., \code{ECALL} and \code{OCALL}. The \code{ECALL} allows the untrusted part to invoke code inside the trusted part (enclave) and send encrypted data to the enclave. \code{OCALL} allows the enclave to send encrypted data to the untrusted part.
In addition, SGX provides a remote attestation feature, allowing the client to authenticate the enclave's identity, verify the integrity of the enclave's code and data, and share encryption keys. We refer the readers to \cite{costan2016intel} for more details about Intel SGX. During remote attestation, the client and the enclave establish a secure channel for secure communication and share their encryption keys.

\subsection{ORAM}
Oblivious RAM (ORAM) was introduced by Goldreich and Ostrovsky \cite{goldreich1996software} to protect client data on an untrusted remote machine (such as the cloud) against access pattern attacks. The main idea of ORAM is to hide the user's access patterns by hiding the accessed memory addresses on the cloud, hence, making the user's access oblivious. To achieve this, ORAM continuously shuffles and re-encrypts a portion or all of the user's accessed data in a trusted memory region or on a trusted third-party machine. Since then, many ORAM schemes have been proposed, to list a few, \cite{boneh2011remote, damgaard2011perfectly,gentry2013optimizing,goldreich1996software,goodrich2011privacy,ostrovsky1990efficient}. As of writing this paper, the most notable among them is the widely adopted PathORAM \cite{stefanov2018path}.

PathORAM uses a complete binary tree of height $L=\left[\log _{2} N\right\rceil-1$ to store $N$ encrypted records on untrusted storage, e.g., an untrusted cloud. Every node in the tree is a bucket, where each bucket contains a fixed number ($Z$) of encrypted data blocks. The tree contains real blocks (client's blocks) and dummy blocks. Each data block is randomly assigned to a leaf between 0 and $2^L -1$. The client maintains a \code{Position Map} that tracks the leaf to which each block points in the tree. To access a block in the tree, the client retrieves all blocks on the path from the root node to the leaf node of the required block and stores them in the stash. The client then assigns a random leaf to the accessed/updated block. All blocks are then re-encrypted and inserted in buckets where empty buckets are filled with dummy blocks. The accessed path is then written back to the tree (cloud). As a result, the cloud cannot tell which block is accessed by the client or identify the blocks.
In  \oramname{}, we adopt and enhance PathORAM to support non-blocking queries while achieving high performance with a bit of relaxation on the security guarantee (See \oramname{} \cref{sec:oram_security_guarantees}).

\subsection{Oblivious Primitives}\label{sec:oblivious_primitives}
Oblivious primitives are used to access or move data in an oblivious manner, i.e., without revealing access patterns. For example, accessing the $i$-th item in an array can reveal the memory address of the item. However, an oblivious access primitive accesses the item without revealing its address. Therefore, hiding the memory access patterns. We use a set of data-oblivious primitives based on previous works \cite{ohrimenko2016oblivious,poddar2020visor}.
We use the following primitives from a library of general-purpose oblivious primitives provided in \cite{law2020secure}:

\noindent -- \textbf{Oblivious comparisons}. The following primitives are used to obliviously compare two variables; \code{oless(x,y)}, \code{ogreater(x,y)} and \code{oequal(x,y)}.

\noindent -- \textbf{Oblivious assignment}. The \code{oassign(cond,x,y)} is a conditional assignment primitive, where a value is moved from a source to a destination variable if the condition is true. Typically, the \code{oassign} is used with the oblivious comparisons primitives to compare values and return a result obliviously.

\noindent -- \textbf{Oblivious array access}. The \code{oaccess(op, arr, index)} primitive is used to read/write an item in an array without revealing the item's address.

\noindent -- \textbf{Oblivious exists}. The \code{oexists(arr,x)} primitive is used to obliviously determine if a giving item exists in a giving array or not. \code{oexists} achieves this by combining \code{oaccess} and \code{oequal}.

%%%%% System Overview %%%%%

\section{System Overview}\label{sec:system_overview}

\begin{figure}
\vspace{-25pt}
\centering
    \includegraphics[width=0.85\linewidth]{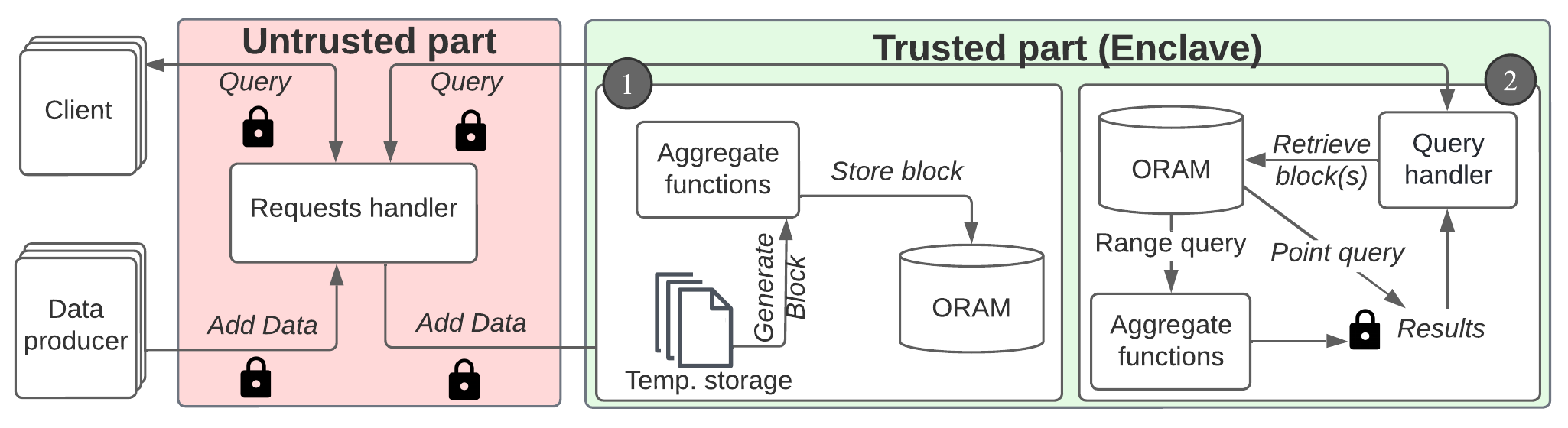}
    \caption{\fname{} architecture. The figure illustrates how \fname{} securely and obliviously stores time series data (1), and how it handles clients' queries (2).}
    \label{fig:architecture}
    \vspace{-16pt}
\end{figure}

\fname{} consists of 4 entities: data producers, clients, server (referred to as the cloud or server), and the SGX enclave running on the server. Data producers, such as sensors or other devices, generate raw time series data and upload them to the server. Clients send encrypted queries to the server. Finally, the server (typically deployed in the cloud) stores the time series data and handles clients' queries.

\subsection{Threat Model}\label{sec:threat_model}
We consider a semi-trusted server where an adversary is interested in learning clients' data and queries. Furthermore, we consider an adversary who has full control over the server except for the CPU (Intel SGX). Therefore, the adversary can obtain the encrypted client queries and data, but cannot examine them in plaintext. In addition, the adversary can see the entire memory trace, including the enclave memory (at the page level). We consider trusted data producers and clients; therefore, only authorised users can submit queries to the server. We do not consider DoS attacks or other side-channels on the enclave, e.g., speculative execution \cite{BulckForeshadow18}, voltage changes \cite{chen2021voltpillager}, or cache attack \cite{gullasch2011cache}. We discuss the of security of \oramname{} in \cref{sec:oram_security_guarantees} and \fname{} in \cref{sec:framework_security_guarantees}.

\subsection{Overview of \fname{}}

The architecture of \fname{} is depicted in Fig.~\ref{fig:architecture}. We can see that \fname{} is partitioned into two parts on the server: untrusted and trusted. The untrusted part runs on the host outside Intel SGX while the trusted part runs inside Intel SGX (referred to as the enclave or enclave code). The role of the untrusted part is to facilitate communication between the client (including the data producer) and the enclave. This is achieved by receiving encrypted requests from the client and sending them to the enclave. 

To protect data and queries from the cloud, data producers, clients, and the enclave share a secret key $k$. 
The raw time series data and queries are encrypted with $k$ before being sent to the cloud. Note that clients are authenticated and $k$ is shared during SGX Remote Attestation (RA). Also, clients and data producers authenticate the enclave's identity and verify the code's integrity during RA.  

As depicted in Figure \ref{fig:architecture}, most of \fname{} components run inside the enclave. \fname{} stores the data in an ORAM inside the enclave, i.e., \oramname{} to protect against SGX memory access patterns. The reason of deploying \fname{} entirely inside the enclave is that new generations of SGX support up to 1 TB of enclave memory, allowing applications to store more data and run larger applications inside the enclave. \fname{} utilises the enclave’s large memory to securely store and process data securely within the enclave \cite{intelsgx1tb}.

The trusted and untrusted components of \fname{} cooperate to perform two main functionalities:
storing time series data and processing client queries.

\noindent\textbf{1) Storing time series data}. The data producer generates and encrypts raw time series data and transmits them to the server. A request handler on the untrusted part forwards it to the enclave. The enclave then decrypts the data and stores them temporarily for a pre-defined time interval ($T$). When $T$ elapses, \fname{} (inside the enclave) calculates the aggregated values (using all supported aggregate functions) for the data points cached in temporary storage and stores the values in a block. Once the block is generated, the raw time series data are discarded, and the block is stored in the ORAM, i.e., \oramname{}.

\noindent\textbf{2) Clients’ Query}. The client first generates a query $\mathcal{Q}= \langle f,(t_a,t_b) \rangle$, where $f$ is the aggregate function, $t_a$ and $t_b$ are the time range. The client then encrypts the query, i.e., $\mathcal{Q'} = \mathcal{E}_k(\mathcal{Q})$, where $\mathcal{E}$ is a symmetric encryption scheme. The encrypted query is then transmitted to the server. Once the request handler receives the query, it sends it to the enclave. The enclave first decrypts the query by computing $\mathcal{Q} = \mathcal{D}_k(\mathcal{Q'})$, and then sends it to the query optimiser. The query optimiser is responsible for optimising the number of accessed blocks in the ORAM tree by efficiently accessing and merging blocks from different aggregation intervals (For more details refer to \ref{sec:query_realisation}. The results are encrypted using the client’s key and returned to the client through a secure channel established through RA.

\noindent \textbf{Supported Aggregate Functions}:\label{sec:supported_aggregates}
\fname{} supports a set of additives aggregates, i.e., sum, count, mean, variance and standard deviation. In addition, \fname{} supports a set of more complex non-additive aggregates, i.e., max and min. \fname{} uses the previous functions on a set of data points to generate summarised blocks for a pre-defined time interval (see \cref{sec:data_model}).  These functions are used to answer simple and complex queries, where multiple aggregated values are used in the calculations instead of raw data points (see \cref{sec:query_realisation}). 

%%%%% RoORAM %%%%%

\section{\oramname{}}\label{sec:proposed_oram}
In this section, we introduce our proposed ORAM, namely \oramname{}, which is integrated into \fname{} to provide oblivious data storage inside the enclave capable of handling non-blocking read-operations (i.e., clients' queries). Later, we describe how \fname{} efficiently stores time series data in \oramname{} and how it is used to realise clients’ queries.

\subsection{Overview}\label{sec:oram_overview}
The main idea of \oramname{} is to decouple the eviction process from the read/write operation. This separation allows \oramname{} to evict the accessed paths without blocking clients' queries. Inspired by \cite{chakraborti2018concuroram}, \oramname{} performs read-operations (queries) on a read-only tree and write-operations (writing data and eviction) on a write-only tree. However, unlike \cite{chakraborti2018concuroram}, \oramname{} stores and obliviously accesses the controller and components inside the enclave (as illustrated in Figure \ref{fig:oram_overview}). Further, \oramname{} does not require clients' involvement to maintain a locally-cached sub-tree nor requires synchronisation with the server. Such a lightweight design makes \oramname{} tailored for in-enclave read-heavy time series processing systems, which require real-time and low-latency query processing.

\oramname{} uses the dedicated read-only tree to perform query operations only (read operation). This allows \oramname{} to perform multiple read-operations on the read-only tree (reading multiple paths) before evicting the accessed paths. The retrieved blocks are stored in the stash. After $\mathcal{R}$ read-operations, \oramname{} evicts the blocks from the stash to the write-only tree and then synchronises both trees. \oramname{} notations are defined in Table~\ref{tbl:notations}.
\begin{figure}
\centering
    \includegraphics[width=0.8\linewidth]{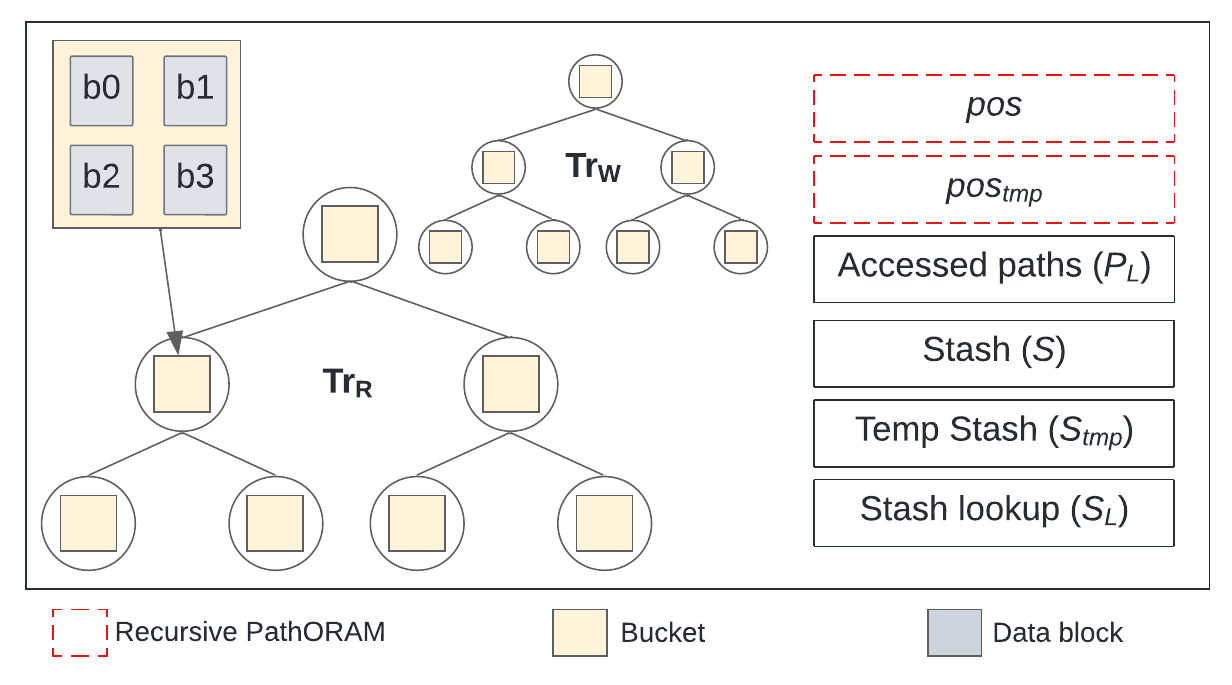}
    \caption{\oramname{} structure and components. The position map ($pos$) and the temporary position map ($pos_{tmp}$) are stored in recursive PathORAM. $S$, $S_{tmp}$, $S_L$ and $\mathcal{P}_L$ are stored in an array and are accessed using oblivious primitives \cref{sec:oblivious_primitives}.}
    \label{fig:oram_overview}
\end{figure}

\subsection{Structure and Components}

\begin{table}
    \vspace{-20pt}
    \caption{\oramname{} notations. $^{*}$ Represents notations introduced by \oramname{}.}
    \label{tbl:notations}
    \vspace{0.2cm}
    \renewcommand{\arraystretch}{1.2}
    \begin{tabular}{|p{0.18\linewidth} | p{0.8\linewidth}|}
        \hline
        \textbf{Notation} & \textbf{Meaning} \\
        \hline$N$ & Total number of blocks on the server \\
        \hline $L$ & Height of binary tree. $L=\left[\log _{2} N\right\rceil-1$). \\
        \hline $B$ & Block size in bytes \\
        \hline $Z$ & Bucket capacity (in blocks) \\
        \hline $\mathcal{P}(x)$ & Path from leaf node $x$ to the root \\
        \hline $\mathcal{P}(x, \ell)$ & The bucket at level $\ell$ along the path
        $\mathcal{P}(x)$ \\
        \hline $\mathcal{P_L}$$^{*}$ & Path lookup, a list of accessed paths' IDs by the read operation \\
        \hline $\mathcal{R}$$^{*}$ & Eviction frequency, number of read operations before batch eviction \\
        \hline $S$ & Read stash \\
        \hline $S_{tmp}$$^{*}$ & Write Stash, a temporary stash used during eviction. \\
        \hline $S_{L}$$^{*}$ & Stash blocks lookup \\
        \hline $pos$ & Position map used for read operations \\
        \hline $pos_{tmp}$$^{*}$ & Temporary position map used during batch eviction. \\
        \hline $Tr_R$ & Read-only tree, used for read operations \\
        \hline $Tr_W$$^{*}$ & Write-only tree, used during batch eviction\\
        \hline
    \end{tabular}
    \vspace{-10pt}
\end{table}

\noindent\textbf{Binary tree}. Similar to PathORAM, \oramname{} stores data in a binary tree data structure of height $L=\left\lceil\log _{2}(N)\right\rceil-1$ and $2^{L}$ leafs.

\noindent\textbf{Bucket}. Each node in the tree contains a bucket, where each bucket contains $Z$ real data blocks. Buckets containing less than $Z$ blocks are filled with dummy data.

\noindent\textbf{Path}. A path represents a set of buckets from the leaf node $x$ to the root node. $\mathcal{P}(x)$ denotes the path from leaf node $x$ to the root and $\mathcal{P}(x, \ell)$ denotes the bucket at level $\ell$ along the path $\mathcal{P}(x)$.

\noindent\textbf{Block}. A block contains summarised data for a specific pre-defined time interval. Each block is assigned a random path in the tree between 0 and $2^{L} -1$. Accessing a block is achieved by accessing a path $\mathcal{P}(x)$ on the read-only tree $Tr_R$. 

\noindent\textbf{Stash $S$}. When a path is accessed, blocks are stored and kept in the stash $S$ until batch eviction. During batch eviction, the items in the stash $S$ are moved to a temporary stash $S_{tmp}$, allowing query operations to insert blocks into $S$. Stash and temporary stash has a size of $O(\log_{2} N) \cdot \mathcal{R}$.
Notice that the stash avoids block duplication by storing unique blocks only while replacing duplicated blocks with dummy data to avoid information leakage. 

\noindent\textbf{Stash lookup $S_L$}. Stash lookup $S_L$ contains only the IDs of the retrieved blocks and is used to answer whether a block is in the stash or not. The cost of accessing $S_L$ is lower than $S$ as $S_L$ contains smaller-sized data than $S$. Similar to $S$, $S_L$ has a worst case size of $O(\log_{2} N) \cdot \mathcal{R}$.

\noindent\textbf{Position map $pos$}. The position map stores the path to which each block belongs. The position map is updated every time a block is accessed. \oramname{} stores the position map in a recursive PathORAM \cite{stefanov2018path} instead of an array to achieve obliviousness. The reason is that the position map contains large number of items, therefore, storing these items and accessing them linearly has a high cost compared to a recursive PathORAM. 

\noindent\textbf{Path lookup $\mathcal{P}_L$}. It stores the list of accessed paths $\mathcal{P}_L$ (leaf nodes' IDs) that have been accessed during read-operations (query). $\mathcal{P}_L$ is used during a batch eviction to write the accessed paths back to the tree.
\oramname{} clears the list after each batch eviction; thus, the maximum size of the list is $\mathcal{R}$.

\subsection{Initialisation}
Both read- and write-trees are initialised with height $L=\left[\log _{2} N\right\rceil-1$. 
Therefore, each tree contains $2^{L+1} -1$ buckets, where each bucket is filled with dummy blocks.
Position maps are initialised with an independent random number between 0 and $2^{L}-1$. 
Stash, temporary stash, and stash lookup are initialised with empty data.  
Path lookup $\mathcal{P}_L$ is initialised with empty data with size $\mathcal{R}$.

\subsection{Read Operation}\label{sec:read_access}

\begin{algorithm}

\scriptsize
\caption{Read Operation}\label{alg_read_access}
\begin{multicols}{2}
\begin{algorithmic}[1]

\Require $b_{id}$ - Block id
\Ensure Summarised data block
\Function{ReadAccess}{$b_{id}$}
\State $x \gets pos[b_{id}]$ 
\State $pos_{tmp}[b_{id}] \gets$ \\
    \hspace{1cm} $UniformRandom(0...2^{L}-1)$
\color{purple}
\If{ \texttt{oexists}$(b_{id}, S_{L}$\texttt{)}}
    \State  $ReadPath(Tr_{R}, dummy)$
    \State \texttt{oaccess}$(write,\mathcal{P}_L, dummy$\texttt{)}
\Else
    \State $ReadPath(Tr_{R}, x)$
    \State \texttt{oaccess}$(write,\mathcal{P}_L,x$\texttt{)}
\EndIf
\color{black}
\State \texttt{oassign}$(true, d',$ \texttt{oaccess(}$read,S, b_{id}$\texttt{))}
\State \Return $d'$
\EndFunction

\end{algorithmic}
\hfill
\begin{algorithmic}[1]
\item[]
\Function{ReadPath}{$ORAM, x$}
    \For{$l \in$ \{ $L,L-1...0$ \}}         
        \State \texttt{oaccess}$(write,S,$ $GetBucket(P(x,l))$
        \color{purple}
        \State \texttt{oaccess}$(write,S_L,$ \\
        \hspace{1cm} $GetBucket(P(x,l)).b_{id}$
        \color{black}
    \EndFor
\EndFunction
\end{algorithmic}
\end{multicols}
\end{algorithm}

The details of \code{READACCESS} are shown in Algorithm~\ref{alg_read_access}. It is worth noting that, aside from the distinct design variations between \fname{} and PathORAM, the algorithmic distinctions are also demonstrated in Algorithm \ref{alg_read_access}, \ref{alg_write_access}, and \ref{alg_eviction}, which are highlighted in red.
To access block $a$, given its block ID $b_{id}$, \oramname{} first accesses the position map to retrieve the block's position in $Tr_R$, such that $x:= pos[b_{id}]$. 
Second,  a new random path is assigned to block $a$ and stored in the temporary position map ($pos_{tmp}$). \oramname{} updates $pos_{tmp}$ instead of $pos$ as the accessed block will not be evicted before $\mathcal{R}$ read operations. Thus, avoiding inconsistent block position for subsequent queries before a batch eviction.

The next step is to check whether block $a$ is stored in the stash or not by searching $S_L$ with the oblivious primitive \code{oexists} \cref{sec:oblivious_primitives}. 
If $S_L$ contains $a$, a dummy path will be accessed; otherwise, path $x$ is accessed. 
By doing so, the adversary cannot infer whether block $a$ is located in the stash ($S$) or in the ORAM tree ($Tr_R$). 
In both cases, the retrieved blocks of the accessed path are stored in the stash $S$, and its path ID is tracked in $\mathcal{P}_L$.
Finally, block $a$ is obliviously retrieved from the stash $S$ with \code{oaccess} and assigned to $d'$ with \code{oassign}.

\subsection{Write Operation}
\label{sec:oram_write_access}
\begin{algorithm}[t]
\scriptsize
\caption{Write Operation}\label{alg_write_access}
\begin{multicols}{2}
\begin{algorithmic}[1]
\Require $data*$ - Block data, $time$ - block time interval
\Function{WriteAccess}{$data*, time$}
\color{purple}
\State $\underline{QueryLock.lock}$ 
\State $x \gets UniformRandom(0...2^{L}-1)$
\State $b_{id} \gets time$
\State $pos_{tmp}[b_{id}] \gets x$
\State \texttt{oaccess(}$write,S, data*$\texttt{)}
\State \texttt{oaccess(}$write,S_L, b_{id}$\texttt{)}
\State $\underline{QueryLock.unlock}$ 
\color{black}
\EndFunction
\end{algorithmic}
\end{multicols}
\end{algorithm}

%%%%%%%%%%%%%%%%%%%%%%%%%%%

The details of \code{WRITEACCESS} is given in Algorithm~\ref{alg_write_access}. 
The data block to be written $data*$ is associated with a time interval $time$, and will be used as the ID of $data*$. 
Since the stash is accessed by both read and write operations, adding a block to the stash requires synchronisation using a mutex (i.e., query lock). Therefore, queries are blocked during a write operation. However, a write operation requires few operations only, such as adding the block to stash and updating the position map, which adds a negligible overhead. Note that if the stash is full, \oramname{} will automatically evict the blocks (see \cref{sec:eviction}).

To write $data*$ to the tree, \oramname{} assigns a random path $x$ to it by setting $pos[time] \leftarrow x$ and obliviously adds the block to the stash $S$ with \code{Oaccess}. Meanwhile, the block ID $time$ is added to $S_L$, as $data*$ is stored in the stash. 
Unlike other tree-based ORAMs, stash items in \oramname{} are not evicted after a write operation. Instead, stash items are evicted in batches after $\mathcal{R}$  read operations \cref{sec:eviction}.

\subsection{Batch Eviction and Trees Synchronisation}
\label{sec:eviction}
\begin{algorithm}[t]
\scriptsize

\caption{Batch Eviction}\label{alg_eviction}
\begin{multicols}{2}
\begin{algorithmic}[1]
\Function{Evict()}{}
    \color{purple}
    \State let $eBuckets$ be the IDs of evicted buckets
    \State $\underline{QueryLock.lock}$ 
    \State $swap(S,S_{tmp})$
    \State $S_L.clear()$
    \State $\underline{QueryLock.unlock}$ 
    \color{black}
    
    \ForEach {$p \in \mathcal{P}_L$}
        \For{$l \in$ \{ $L,L-1...0$ \}}                    
            \color{purple}
            \If{\texttt{!oexists(}$\mathcal{P}(p,l).id, eBuckets$\texttt{)}}
                \color{black}
                \State $ S' \gets (a',data') \in S_{tmp}  : \mathcal{P} (pos_{tmp}[a'],l)$
                \State $S' \gets$ Select $min(|S'|,Z)$ blocks from $S'$
                \color{purple}
                \State $S_{tmp} \gets S_{tmp} - S'$ 
            \State $eBuckets = eBuckets \cup \mathcal{P}(p,L).id$
            \color{black}
            \EndIf
        \EndFor
    \EndFor
    \color{purple}
    \State eBuckets.clear()
    \State $\underline{QueryLock.lock}$ 
    \State Copy changes from $Tr_W$ to $Tr_R$
    \State Copy changes from $pos_{tmp}$ to $pos$
    \State $S \gets S \cup S_{tmp}$
    \State Clear $S_{tmp}$
    \State Clear $\mathcal{P}_L$
    \State $\underline{QueryLock.unlock}$ 
    \color{black}
\EndFunction
\end{algorithmic}
\end{multicols}
\end{algorithm}
\oramname{} performs $\mathcal{R}$ read operations on the read-only tree prior to a batch eviction. As a result, \oramname{} needs to write multiple paths at once in a single non-blocking batch eviction. It is known that eviction in PathORAM is an expensive process; hence, it can degrade the query performance. \oramname{} addresses this issue by blocking only queries during the execution of critical sections in batch evictions. Note that there are a few steps during eviction where \oramname{} needs to block queries. However, these steps have a negligible impact on query performance, making the batch eviction a non-blocking process. 
As shown in Algorithm \ref{alg_eviction}, \oramname{} splits the batch eviction into two phases:

\noindent\textbf{Path Writing Phase (lines 3-17, Algorithm~\ref{alg_eviction})}. During the eviction phase, \oramname{} starts by acquiring a mutex for a short period to swap $S$ and $S_{tmp}$.  Swapping stash items allows query operations (read operations on the ORAM) to insert blocks into $S$ while batch eviction is in process. 
Note that moving stash items is achieved by a simple reference swap instead of swapping data. 
The eviction process writes all the accessed paths recorded in $\mathcal{P}_L$ to the write-only tree (lines 7 to 17).  Specifically, for each path $p$ in $\mathcal{P}_L$, \oramname{} greedily fills the path's buckets with blocks from $S_{tmp}$ in the order from leaf to root. This order ensures that the blocks are pushed into $Tr_W$ as deep as possible. All non-evicted blocks remain in $S_{tmp}$ to be evicted in subsequent batch evictions.
%%%%%%%

When \oramname{} writes multiple paths to $Tr_W$, there can be an intersection between two paths, at least at the root level.  A bucket may be written several times during a batch eviction (e.g., the root node's bucket), causing a buckets collision. 
\oramname{} avoids that by writing every bucket only once. Such an approach can improve performance by reducing the number of evicted buckets. However, it leaks the number of intersected buckets to the adversary. \oramname{} prevents such leakage by performing fake access to all buckets in the intersected paths.
%%%%%%%

\noindent\textbf{Synchronisation Phase (lines 18-24, Algorithm~\ref{alg_eviction})}. At this point, $Tr_R$ needs to be synchronised with $Tr_W$ to reflect the new changes. To synchronise the two trees with minimal query blocking (Lines 18 to 24), \oramname{} copies only the written changes (i.e., paths) from $Tr_W$ to $Tr_R$ instead of copying the entire tree. In addition, \oramname{} copies the changes from $pos_{tmp}$ to $pos$ and any non-evicted blocks in $S_{tmp}$ to $S$. 

\subsection{Memory Consumption}\label{sec:storage}
\oramname{} uses two separate ORAM trees to handle non-blocking queries. As a result, it consumes twice the data as the ORAM baseline. In detail, \fname{} maintains two separate ORAM trees to store the data blocks, two stashes, two position maps and the list of accessed paths.
Each tree will have a height $L=\left[\log _{2} N\right\rceil-1$. Therefore, each tree will be able to store up to $2^{L+1} -1$ buckets. Since each bucket contains $Z$ blocks, the number of blocks stored in the tree is $2^{L+1} -1 \cdot Z$. Therefore, each tree will require $2^{L+1} -1 \cdot Z \cdot B$ bytes. Stash ($S$) and Stash Lookup ($S_L$) require $O(\log_{2} N) \cdot \mathcal{R} \cdot B$ bytes. Path Lookup ($\mathcal{P_L}$) requires $\mathcal{R} \cdot 4$ bytes (since each path requires 4-bytes).

\subsection{\oramname{}{} Efficiency Analysis}
\oramname{}'s operation overheads involve four main operations: accessing and updating the temporary position map $pos_{tmp}$, stash lookup table access $S_L$, read-only tree $Tr_R$ path reading, and path lookup access $\mathcal{P}_L$. Each of these operations is associated with a computational cost. For read operations \cref{sec:read_access}, accessing and updating $pos_{tmp}$ and the path reading from $Tr_R$ both involve an asymptotic cost of $O(\log N)$ due to recursive PathORAM and its position map inside the enclave. Accessing $S_L$ costs $O(\log N) \cdot \mathcal{R}$, while accessing $\mathcal{P}_L$ costs $O(\mathcal{R})$. Therefore, the overall cost for a read operation is $O(log N)$, yielding similar asymptotic complexity to PathORAM. Nevertheless, \oramname{} shows higher performance up to 2.5 times \cref{sec:evaluation}. This enhancement is due to \oramname{}'s design of decoupling the non-blocking read operations from the non-blocking eviction process.The write operation \cref{sec:oram_write_access}  accesses $pos_{tmp}$, $S$, and $S_L$. I.e., $O(\log N)$  +  $O(\log N) \cdot \mathcal{R}$  + $O(\log N) \cdot \mathcal{R}$ where $\mathcal{R}$ is a constant. Consequently, the overall cost is $O(log N)$. Finally, path writing during batch eviction requires $O(\log N)$, and tree synchronization, which involves copying $O(\log N) · R$ items, leads to an overall cost of $O(\log N)$.
\subsection{Security of \oramname{}}\label{sec:oram_security_guarantees}
To prove the security of \oramname{}, we adopt the standard security definition for ORAMs from \cite{stefanov2011towards}. An ORAM is said to be secure if, for any two data request sequences, their access patterns are computationally indistinguishable by anyone but the client. 
\oramname{} is similar to PathORAM but excludes two main points: 1) \oramname{} stores all components in the enclave, whereas PathORAM stores the stash and position map in the client; 2) PathORAM evicts the stash data after each access, while \oramname{} performs batch eviction after $R$ read operations. 
Therefore, the security definition of \oramname{} is captured in the following theorem. 

\begin{theorem}\label{def:oram}
\textit{\oramname{} is said to be secure if, for any two data request sequences $\vec{y_1}$ and $\vec{y_2}$ of the same length, their access patterns $A(\vec{y_1})$ and $A(\vec{y_2})$ are computationally indistinguishable by anyone but the enclave.
}\end{theorem}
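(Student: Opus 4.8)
The plan is to reduce the security of \oramname{} to that of PathORAM (Theorem from \cite{stefanov2018path}) by arguing that every memory access issued by \oramname{} either coincides with a PathORAM access on one of its two trees, or is chosen independently of the request sequence. First I would set up the access-pattern notation: for a request sequence $\vec{y}$, the observable $A(\vec{y})$ consists of the sequence of tree-path indices read from $Tr_R$, the path indices written to $Tr_W$ during each batch eviction, the sequence of bucket indices touched during eviction (including the fake accesses to intersected buckets), the recursive-PathORAM accesses for $pos$ and $pos_{tmp}$, and the oblivious-primitive traces over $S$, $S_L$, $S_{tmp}$, $\mathcal{P}_L$. The goal is to show this whole tuple is distributed identically (hence computationally indistinguishable) for any two equal-length $\vec{y_1},\vec{y_2}$.

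The core of the argument proceeds component by component. (1) For each read operation, the path read from $Tr_R$ is either $x=pos[b_{id}]$ or a \texttt{dummy} path, and in either case the branch is taken so that exactly one full root-to-leaf path is read; by the PathORAM invariant $x$ is the independently-chosen leaf last assigned to $b_{id}$, so it is uniformly random and independent of which block is being accessed — this is precisely the PathORAM lemma, and the $S_L$-based dummy branch only ever substitutes one uniformly random leaf for another, so the marginal distribution of the read path is uniform regardless of the request. (2) The oblivious primitives \texttt{oexists}, \texttt{oaccess}, \texttt{oassign} on $S$, $S_L$, $\mathcal{P}_L$ by assumption (Section~\ref{sec:oblivious_primitives}) produce access traces independent of their data arguments and indices, so they contribute nothing distinguishing. (3) Batch eviction is triggered deterministically every $\mathcal{R}$ read operations (or when the stash fills, which I would argue happens with negligible probability under the standard PathORAM stash bound, appropriately scaled by $\mathcal{R}$), so the schedule of evictions is fixed by the sequence length alone. (4) Within an eviction, the written paths are exactly the $\mathcal{R}$ leaves recorded in $\mathcal{P}_L$, each of which is one of the uniformly random read-paths from step (1); the greedy leaf-to-root fill and the once-only bucket writes plus compensating fake accesses mean the bucket-touch sequence is a deterministic function of the (uniform, request-independent) multiset of paths in $\mathcal{P}_L$. (5) The recursive PathORAM holding $pos$ and $pos_{tmp}$ is itself accessed once per block-access with a key equal to $b_{id}$; I would invoke PathORAM security recursively so those sub-accesses are also indistinguishable. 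Composing these, $A(\vec{y})$ is a function of: the (fixed) length-determined eviction schedule, and a collection of independent uniform leaf labels — none of which depends on the content of $\vec{y}$ — so $A(\vec{y_1})\equiv A(\vec{y_2})$.

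I would then address the one genuine relaxation flagged in Section~\ref{sec:oram_security_guarantees}: because eviction is batched after $\mathcal{R}$ reads rather than after every access, a block can reside in $S$ across several subsequent reads, and the dummy-path substitution must hide this. The argument is that the conditional distribution of the $j$-th read path given the history is still uniform: if $b_{id}$ was read (or written) since the last eviction it is in $S_L$ and a fresh uniform \texttt{dummy} leaf is read; if not, $pos[b_{id}]$ is a uniform leaf that has not been revealed since it was (re)assigned. Either way the observed leaf is a fresh uniform value, so no correlation with repeats in $\vec{y}$ leaks. This case analysis — making precise that ``fresh uniform leaf'' holds even with batched eviction and in-enclave stash reuse — is the step I expect to be the main obstacle, since it is where \oramname{} departs from vanilla PathORAM and where a sloppy argument could miss a correlation between the dummy-branch indicator and the request pattern; everything else is a routine composition of the PathORAM lemma with the obliviousness guarantees of the primitives.
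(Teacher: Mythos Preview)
Your component-by-component reduction is cleaner than the paper's argument, and steps (2)--(5) are fine. The gap is precisely where you anticipated it, in the final paragraph: the claim that in the ``not in $S_L$'' branch the observed leaf is ``a fresh uniform value'' is false, because the event $b_{id}\notin S_L$ is \emph{not} independent of $pos[b_{id}]$. After the first read in a batch retrieves path $p_1$, every block physically on $p_1$ is inserted into $S$ and $S_L$; so if the next request is for a distinct block $a_2$ and you find $a_2\notin S_L$, you have learned that $a_2$ does not sit on $p_1$, which (via the PathORAM invariant that $a_2$ lies somewhere on path $pos[a_2]$) biases $pos[a_2]$ away from $p_1$ --- in the extreme where $a_2$ occupies its leaf bucket, $a_2\notin S_L$ forces $pos[a_2]\neq p_1$ outright. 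Hence $\Pr[p_2=p_1]$ is strictly smaller than $1/M$ in this branch, whereas a repeated request for $a_1$ fires the dummy branch and gives $\Pr[p_2=p_1]=1/M$ exactly. A path collision within a batch is therefore statistical evidence that the same block was requested twice, and the sequences $(a_1,a_1,\ldots)$ and $(a_1,a_2,\ldots)$ are distinguishable. Your ``fresh uniform either way'' case analysis silently drops this conditioning.

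The paper's proof does not try to argue the leak away; it exhibits exactly this distinguisher, concedes ``a bit of relaxation on the security guarantee'', and then patches the scheme so that the dummy leaf is sampled uniformly from the paths \emph{not yet accessed} in the current batch. With that modification both branches produce a leaf supported on the same set of unread paths, repetitions within a batch never occur regardless of $\vec{y}$, and your conditional-uniformity step can be repaired. So your outline is salvageable, but only for the patched scheme; for \oramname{} exactly as specified in Algorithms~\ref{alg_read_access}--\ref{alg_eviction} the indistinguishability claim fails, and your step~(1)/final-paragraph argument is where it breaks.
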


\begin{proof} To prove the security of \oramname{}, we now provide the following analysis.

Storing and accessing the stash, position map, and other components of \oramname{} within the enclave do not leak additional information to the adversary because they are all accessed with oblivious primitives. 
Based on the security of oblivious primitives, the adversary cannot infer which item is accessed in the position map $pos$, the temporary position map $pos_{tmp}$, the stash $S$, temporary stash $S_{tmp}$, stash lookup $S_L$, and path lookup $\mathcal{P}_L$.

Evicting the stash after $R$ reads can improve the performance of \oramname{} with a bit of relaxation on the security guarantee, compared to the security of Path ORAM.
Note that the security of \oramname{} can reach the standard security via extra dummy access, which will be explained later.
Let $\vec{y}=(a_{R-1}, ..., a_1)$ be a sequence of read-access of size $\mathcal{R}$. The adversary sees a sequence
$A(\vec{y})=\left(\right.$ $pos_{\mathcal{R}}\left[\mathrm{a}_{\mathcal{R}}\right]$, $pos_{\mathcal{R}-1}\left[\mathrm{a}_{\mathcal{R}-1}\right], \ldots$, $pos\left._{1}\left[\mathrm{a}_{1}\right]\right)$. 
It is possible for the adversary to distinguish some read-accesses based on the access pattern, e.g., $\vec{y_1}=(a_1, a_1, ..., a_1)$ and $\vec{y_2}=(a_1, a_2, ..., a_R)$ (here we assume that $a_1$, $a_2$, ..., and $a_R$ are stored in different leaf nodes). 
Recall that within $\mathcal{R}$ reads, if the required block is stored in the stash, a random path will be accessed; otherwise, the required path will be accessed. 
Assume that the tree contains $M = 2^L$ paths. 
In the example, $pos[a_1]$, $pos[a_2]$, ..., and $pos[a_R]$ must be different. For $\vec{y_1}$, any two paths are the same with a probability of $1/M$. However, as long as $A(\vec{y})$ contains one pair of the same path $\vec{y}$ must be $\vec{y}_1$, and the probability of that is $1- C_M^\mathcal{R}/M^{\mathcal{R}}$, which is $0$ for $\vec{y_2}$. 
In other words, if $A(\vec{y})$ does not contain any path repetition, there is a higher probability that $\vec{y}$ is $\vec{y_2}$, otherwise it must be $\vec{y_1}$. 
This issue can be avoided by always accessing a path that has not been accessed after a round of eviction when the required block is stored in the stash. 
In this case, $A(\vec{y})$ never contains repeated paths no matter what  $\vec{y}$ contains, and we just need to ensure $R < M$, which is always the case in practice.
As a result, with the extra dummy access for each read, the adversary cannot distinguish between one sequence and the other.

\end{proof} 

%%%%% TimeClave %%%%%

\section{\fname{}}\label{sec:framework}
In this section, we first describe how \fname{} efficiently generates summarized time series data blocks and how these blocks are stored inside \oramname{}. Then, we describe how \fname{} utilizes these blocks to efficiently answer clients' queries.

In \fname{}, a system administrator is responsible for initialising the system by setting the system parameters, including the block size $B$, bucket capacity $Z$, block generation interval $T$, and eviction frequency $R$. Subsequently, \fname{} continuously receives data points from the data producer, generates data blocks, and answers the client's queries. 

\subsection{Block Generation}\label{sec:data_model}
\fname{} stores the TSD as summarised data blocks based on the supported aggregate functions instead of the raw data points. 
The block summarises raw data points of a pre-defined time interval i.e., $\left[t_{i}, t_{i+1}\right)$ with a fixed interval $T=t_{i+1}-t_{i}$. 
Larger intervals provide lower query accuracy but high performance with less storage. To support multiple accuracy levels and higher query performance, \fname{} generates blocks at different time intervals, i.e, aggregation intervals $V$, where $V=[T_1,T_2,....]$ (See \cref{sec:query_optimisation}). The generated data blocks are stored in \oramname{}. Each block contains the aggregated values for the supported aggregate functions for $\left[t_{i}, t_{i+1}\right)$. 
A block is represented by an array, where each item in the array contains all the aggregated values. By storing summarised blocks, \fname{} reduces the ORAM tree size and the query latency. 

\subsection{Query Realisation}\label{sec:query_realisation}
\fname{} receives encrypted queries from clients in the form $Q= E_k(\langle f,(t_a,t_b) \rangle)$ for a range query and $Q=E_k(\langle f,t_a \rangle)$ for a point query, where $f$ is the aggregate function to be executed over the time interval from {$t_a$ to $t_b$}. In the following subsections, we describe how \fname{} handles point and range queries and how they can be extended to support complex analytics.

\noindent \textbf{Point Queries}\label{sec:point_queries}: 
When \fname{} receives a query, it first decrypts the query, i.e., $Q' = D_{k}(Q)$ where $k$ is the client's private key. \fname{} then extracts $t_a$ from $Q'$ and retrieves the data block using \oramname{}. Once the block is retrieved, \fname{} uses the aggregates position map to find the location of the requested $f$'s value. For example, assume that $f = AVG$, \fname{} will access the $avg$ value at index 0 (assuming that $avg$ is located at index 0) of the retrieved block. Finally, the results are encrypted using the client's private key and sent back to the client.

\noindent \textbf{Range Queries}\label{sec:range_queries}
\fname{} follows the same approach that is used with point queries to answer range queries. However, range queries involve retrieving multiple blocks and multiple aggregated values, instead of one. The retrieved values must be fed into the aggregation function to return the final result to the client. Notice that for functions such as min or max, using the retrieved values to calculate the final results is a straightforward process. The reason is that the maximum value for the aggregated values (summarised values) represents the maximum value for the underlying data points. However, this is not valid for other functions, such as the average and variance. To find the average for multiple aggregated values, we use the weighted average, which is given by the formula,
\begin{equation*}
    avg=\frac{\sum_{i=1}^{n} bw_{i} \cdot bx_{i}}{\sum_{i=1}^{n} bw_{i}}
\end{equation*}
where $bw$ is the number of data points in the block (count), $bx$ is the aggregated average value for the block, and $n$ is the number of blocks retrieved by the query. 

The above formula uses only the count and average values from each retrieved block and performs the calculations in a single pass. \fname{} adopts similar approaches for all supported aggregation functions to answer range queries to achieve high performance without compromising accuracy.

\noindent \textbf{Complex Analytics}\label{sec:complex_analytics}
\fname{} can also combine several aggregate values to answer complex range queries. This is possible because \fname{} retrieves all blocks within the queried time interval $(t_a,t_b)$. Unlike cryptographic approaches, blocks’ values are stored in plaintext inside the enclave. Hence, one can easily combine and perform arithmetic operations on the aggregate values.
\fname{} can also in principle support sketch algorithms such as Count-Min, Bloom filter and HyperLogLog. A single- or multi-dimensional sketch table can be flattened and represented by a one-dimensional array. This allows \fname{} to store a sketch table inside the data block as a range of values (instead of a single aggregate value).

\noindent \textbf{Query Optimisation}\label{sec:query_optimisation}
In  \oramname{}, query latency increases linearly with the number of accessed blocks in range queries. The reason is that each block access in \oramname{} is independent of the preceding and subsequent access. Such an approach allows \oramname{} to offer a stronger leakage profile but degrades query performance. To prevent this performance drawback, \oramname{} optimises queries by reducing the number of accessed blocks. \oramname{} achieves this by maintaining multiple ORAM trees with different time intervals $V$ = $[T_0, T_1, T_2,...]$, where $T_{i-1} < T_i < T_{i+1}$, with $T_i \in V$.

Query optimiser works by examining the client's query $Q=\langle SUM,(t_{1},t_{6}) \rangle$ and determining the optimal combination of aggregation intervals to minimise the total number of accessed blocks.
For example, assume that $T_1=60s$ and $T_2=10s$, hence, every block in the $T_1$ tree summarises 6 blocks of $T_2$ tree. Assume that $Q=\langle SUM,(0,70) \rangle$.
Processing this query without optimisation requires accessing 7 blocks in $T_2$ tree, while with query optimisation, \fname{} will access 1 block from each tree, totalling 2 blocks.
As each aggregation interval requires a separate ORAM tree; therefore, $T_i$ will determine the size of the tree. Note that a larger $T_i$ offers high query performance with lower accuracy and less storage. Therefore, such a parameter can be optimised as per the application's requirements.

\subsection{Security of \fname{}}\label{sec:framework_security_guarantees}
\fname{} handles input parameters and queries securely inside the enclave, i.e., parameters and queries are encrypted using the clients' key and can only be decrypted by the enclave. In addition, \fname{} adopts and integrates \oramname{} to obliviously store and access time series data inside the enclave. Therefore, the security definition of \fname{} is captured in the following theorem.
\begin{theorem}
\fname{} \cref{sec:framework} is secure giving that \oramname{} \cref{sec:proposed_oram} is proven to be secure in Theorem \ref{def:oram}.
\end{theorem}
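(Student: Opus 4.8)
The plan is to reduce the security of \fname{} to that of \oramname{} (Theorem~\ref{def:oram}), the IND-CPA security of the symmetric scheme $\mathcal{E}$, and the obliviousness of the primitives of \cref{sec:oblivious_primitives}, via a standard hybrid argument. First I would fix the adversarial view from the threat model in \cref{sec:threat_model}: the adversary sees the ciphertexts of all data points and of all queries/results, plus the page-level memory-access trace of the enclave. I would then decompose every action the enclave performs into three kinds of sub-operations: (i) symmetric decryption/encryption of client inputs and outputs; (ii) accesses to the in-enclave auxiliary structures (the temporary aggregation buffer, the aggregates position map, the query optimiser's bookkeeping), which are performed with \code{oaccess}/\code{oassign}/\code{oexists}; and (iii) accesses to the ORAM trees through the \oramname{} read/write/evict interface. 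Block generation (\cref{sec:data_model}) is periodic --- it fires exactly when $T$ elapses --- and it always evaluates the full set of supported aggregate functions over the cached points, so its control flow and memory footprint are independent of the data values; the only data-dependent event it triggers is a \oramname{} write.

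Next I would build the hybrids. Hybrid $H_0$ is the real execution. In $H_1$ I replace every ciphertext the client sends, and every result the enclave returns, by an encryption of a fixed dummy of the same length; $H_0 \approx H_1$ by IND-CPA security of $\mathcal{E}$. In $H_2$ I replace the trace produced by the oblivious primitives on the auxiliary structures with the canonical data-independent trace guaranteed in \cref{sec:oblivious_primitives}; $H_1 \approx H_2$ by obliviousness of those primitives. In $H_3$ I replace the access pattern on each ORAM tree by the simulated pattern whose indistinguishability is exactly the content of Theorem~\ref{def:oram} (including the extra dummy access that lifts \oramname{} to the standard ORAM notion); $H_2 \approx H_3$ by Theorem~\ref{def:oram}, invoked once per tree and composed over the at most $|V|$ trees. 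Since $H_3$ depends on the inputs only through their lengths, the arrival times of data points, the arrival times of queries, and --- for range queries --- the per-interval block counts chosen by the optimiser, any two input sequences agreeing on this profile yield computationally indistinguishable transcripts.

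The step I expect to be the main obstacle is $H_2 \to H_3$ together with pinning down precisely what the optimiser reveals. A range query makes \fname{} touch several blocks, possibly spread across trees of different granularities, and the chosen decomposition of $(t_a,t_b)$ is a deterministic function of the (secret) time range; this count/shape is not hidden by \oramname{}, since each block access is independent by design (as noted in \cref{sec:query_optimisation}). So the clean statement is: two query sequences are indistinguishable provided they induce the same optimiser decomposition (equivalently, one pads each range query to a worst-case number of accesses per tree). I would make this explicit as a leakage function $\mathcal{L}$ and prove indistinguishability for sequences with identical $\mathcal{L}$; the reduction is then routine once the per-tree invocation of Theorem~\ref{def:oram} is in place, the remaining care being to ensure the \oramname{} instances for different intervals use independently sampled randomness so that their simulators compose without interference.
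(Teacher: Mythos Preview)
Your proposal is correct and, in fact, considerably more rigorous than the paper's own argument. The paper does not build hybrids or invoke IND-CPA explicitly; it gives an informal enumeration: encryption hides parameters and queries in transit, \oramname{} (via Theorem~\ref{def:oram}) hides the tree accesses, and the remaining in-enclave operations (block generation, extracting the aggregate value from a block) use the oblivious primitives of \cref{sec:oblivious_primitives}. Your decomposition into (i)/(ii)/(iii) and the chain $H_0\to H_3$ is a formalisation of exactly these three ingredients, so the underlying content matches.

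The one substantive difference is your treatment of the query optimiser. The paper asserts that observing which aggregation tree is touched ``will only reveal $T$ but not the exact query range $(t_a,t_b)$'' and leaves it at that. You correctly push further: the optimiser's decomposition of $(t_a,t_b)$ determines the \emph{number} of \oramname{} accesses per tree, and since each access is independent this count is visible. Packaging this as an explicit leakage function $\mathcal{L}$ and restricting indistinguishability to query sequences with identical $\mathcal{L}$ (or padding) is the right fix; the paper's proof simply does not address this granularity. Your version therefore yields a sharper statement at the cost of a slightly more involved argument, while the paper's version is a short sketch that relies on the reader to accept the optimiser leakage as benign.
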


\begin{proof} Giving that \fname{} handles data security inside the enclave, therefore, the adversary cannot examine the system parameters or client queries in plaintext. 
In detail, the adversary cannot examine the following: block size, bucket size $Z$, the total number of blocks $N$, eviction frequency $\mathcal{R}$ and query function. Although the adversary can observe which tree is accessed during quqry optimisation \cref{sec:query_optimisation}, this will only reveal $T$ but not the exact query range (i.e., $(t_a,t_b)$). The reason that \fname{} does not protect the block generation time-interval $T$ is that the adversary can easily observe such an access pattern during block generation.
In addition, \fname{} does not protect the tree, stash, path lookup, or stash lookup sizes, as the adversary can monitor the memory pages and infer them.
Finally, reading and writing blocks from/into \oramname{} is oblivious and secure, as discussed in \cref{sec:oram_security_guarantees}. 
Furthermore, all other operations performed by \fname{}, i.e., block generation and the access to the of the required aggregated values from blocks accessed, are also oblivious as they are performed with oblivious primitives \cref{sec:oblivious_primitives}. Therefore, \fname{} does not leak any memory access pattern to the server. 

\end{proof} 

%%%%% Evaluation %%%%%

\section{Evaluation}\label{sec:evaluation}
In this section, we evaluate \fname{} while asking the following questions:
\begin{enumerate}
    \item What is the performance of \fname{} compared to the non-oblivious version and ORAM Baseline?
    \item How do the internal components of \oramname{} and the levels of aggregation affect its performance?
\end{enumerate}

\noindent\textbf{Implementation}. We implemented \fname{} and \oramname{} in $\sim$4,000 lines of code. Inside the enclave (server-side), we use the trusted cryptography library provided by SGX SDK, i.e., \textit{sgx\_tcrypto} \cite{intel2020sdk}. As mentioned earlier in Section~\cref{sec:supported_aggregates}, we implemented different aggregate functions 
where the value of each function is represented by a single item in the data block. We store values in an array of floating-point numbers (4-bytes per number) in the data blocks.

\noindent\textbf{Expiremental Setup}
We evaluated \fname{} on a local network. For the server, we use an SGX-enabled server running on Intel \textregistered Xeon CPU \textregistered E-2288G @ 3.70 GHz with 8 cores and 128 GiB RAM. The enclave size is 256 MB. We simulate a client with 1 vCPU and 3.75 GB memory. We do not include the network latency in our evaluation as our goal is to measure the performance of \fname{} on the most critical part, i.e., the server-side.

\noindent{\bf Dataset}. 
We use the time series benchmark suite \cite{timecryptBenchmark} to generate CPU utilization dataset. The dataset contains a single attribute (i.e., CPU usage). We initialise \fname{} to store 24 hours of readings (i.e., CPU usage), where each data block in the ORAM tree represents 10s of readings (i.e., $T=10s$). Each block stores 10 aggregate values, each value consumes 4-bytes ($B=40$ bytes). Each bucket in the tree stores 4 blocks ($Z=4$). Therefore, the height of each tree in \oramname{} is $L=13$. 
 
\subsection{Baselines}
\noindent{\bf ORAM baseline}.
In this paper, we introduce and integrate our \oramname{} into \fname{} to achieve obliviousness with minimal query blocking inside the enclave. We evaluated \oramname{} against the widely adopted ORAM, i.e., PathORAM \cite{stefanov2018path}. We integrate both PathORAM and \oramname{} into \fname{} for evaluation. For a fair comparison, as done in \oramname{}, we stored the stash, position map, and the tree in plaintext inside the enclave for PathORAM. Moreover, we store 4 blocks in each bucket ($Z=4$) for both ORAMs.
However, PathORAM performs the eviction after each read/write access.

\noindent{\bf Non-oblivious baseline}.  To understand the overhead of oblivious operations, we evaluate \fname{} without \oramname{} and oblivious operations (referred to as non-oblivious). In detail, \fname{} generates data blocks and stores them inside the enclave in a non-oblivious data structure. Thus, we replace \oramname{} with non-oblivious storage. For the position map, \fname{}  stores block IDs in a key-value structure with a pointer to the block's physical address. To answer clients' queries, \fname{} retrieves the required block id from the position map and accesses (and aggregates) data blocks using non-oblivious operations. Such a setup allows us to evaluate the overhead of obliviousness in \fname{}, i.e., \oramname{} and oblivious operations. 

\subsection{Evaluation Results}

\subsubsection{\textbf{Query latency}}\label{sec:query_latency}
To understand \fname{}'s performance, Table \ref{tbl:query_latency} shows the query latency for different query ranges and block sizes. We evaluated \fname{} using reasonable small block sizes, since the size represents the number of supported aggregates, where a block size of $2^7$ can store up to 32 aggregate values. As expected, the query latency increases linearly with the query range. The reason is that for each queried time interval, \fname{} needs to access one path in \oramname{} and retrieves $(L+1) \cdot Z$ blocks from the ORAM. Similarly, the larger the block size is, the more overhead it will add to the query performance, which is expected in a tree-based ORAM. 

\begin{table}
    \vspace{-20pt}
    \renewcommand{\arraystretch}{1.3}
    \scriptsize
    \centering
        \caption{\fname{} Query latency (ms) for point and different range queries and B block sizes. Note that $R_T$ is a range of intervals, where each interval represents a single aggregated data block.
        }
        \vspace{0.2cm}
    \begin{tabularx}{1\textwidth}{p{2cm} | p{2cm} p{2cm} p{2cm} p{2cm} p{2cm}}
        \hline 
        Range ($R_T$) & \multicolumn{5}{c}{-- \textit{Block size (bytes)} --}
        \\
         & $B=2^3$ & $B=2^4$ & $B=2^5$ & $B=2^6$ & $B=2^7$  \\
        \hline 
        1 & 0.03 & 0.03 & 0.03 & 0.03 & 0.04
        \\
        10 & 0.12 & 0.12 & 0.12 & 0.13 & 0.15
        \\
        20 & 0.19 & 0.2 & 0.21 & 0.22 & 0.26
        \\
        30 & 0.28 & 0.28 & 0.29 & 0.31 & 0.38
        \\
        40 & 0.37 & 0.36 & 0.38 & 0.41 & 0.49
        \\
        50 & 0.46 & 0.46 & 0.46 & 0.5 & 0.59
        \\
        \hline
        % \vspace{-10pt}
    \end{tabularx}
    %\vspace{5pt}
    \vspace{-10pt}
    \label{tbl:query_latency}
\end{table}

Fig. \ref{fig:query_latency_breakdown} shows the query latency breakdown with a block size $B=2^7$. The majority of the overhead is due to ORAM and oblivious operations for point and range queries. By omitting the SGX overhead (as it is consistent with different query ranges), the ORAM operations overhead comprises between 80-85\% of the query latency. On the other hand, the overhead of oblivious operations comprises between 15-20\% of the query latency. Note that the ORAM overhead is reduced in \fname{} by query optimisation as shown in Section~\cref{eval:query_optimisation}.

\begin{figure}
\vspace{-10pt}
  \centering
  \hfil
  \begin{subfigure}{0.49\textwidth}
    \centering
    \includegraphics[width=1\linewidth]{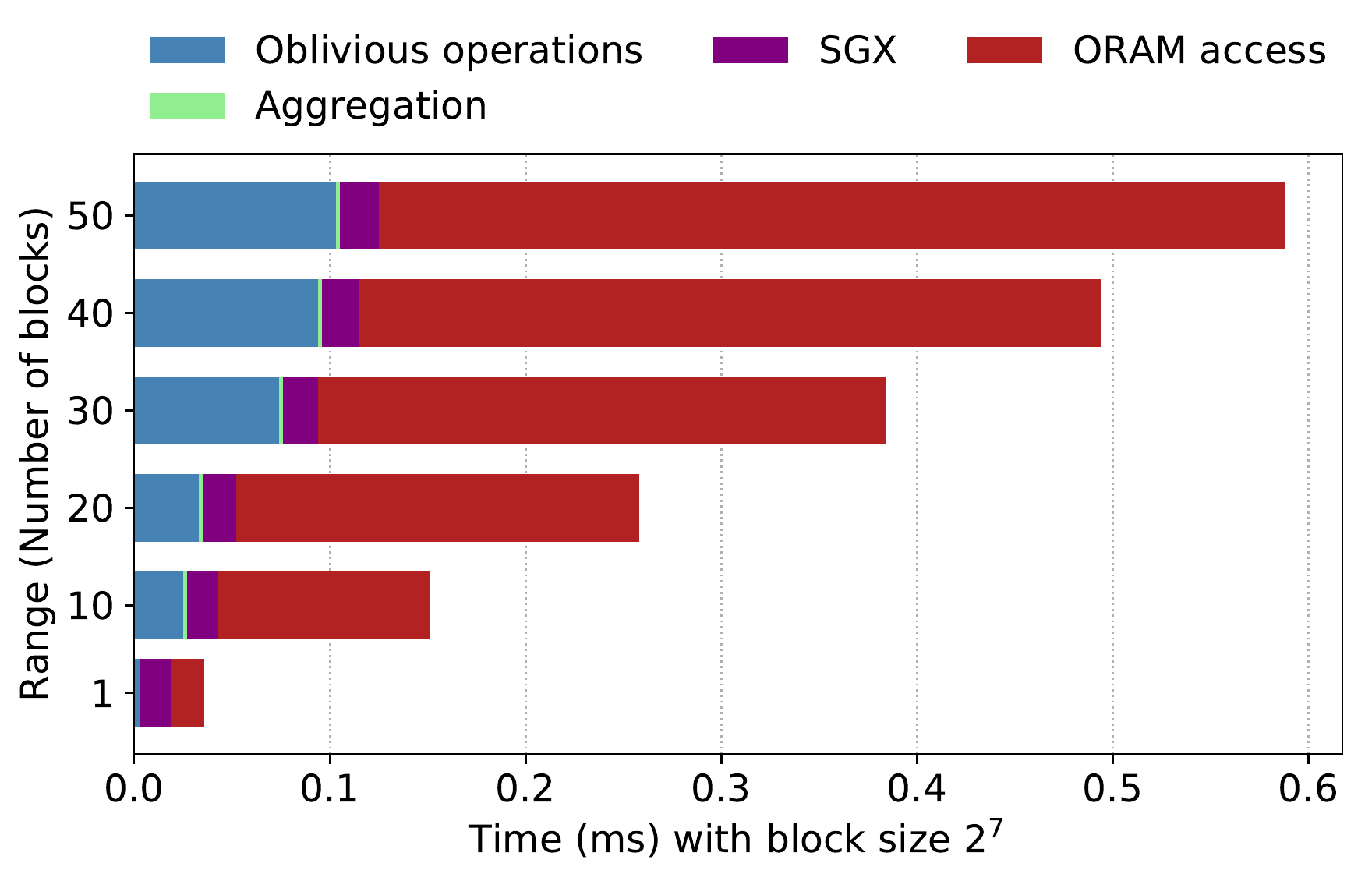}
    \caption{Latency breakdown}
    \label{fig:query_latency_breakdown}
  \end{subfigure}
  \hfil
  \begin{subfigure}{0.49\textwidth}
    \centering
    \includegraphics[width=1\linewidth]{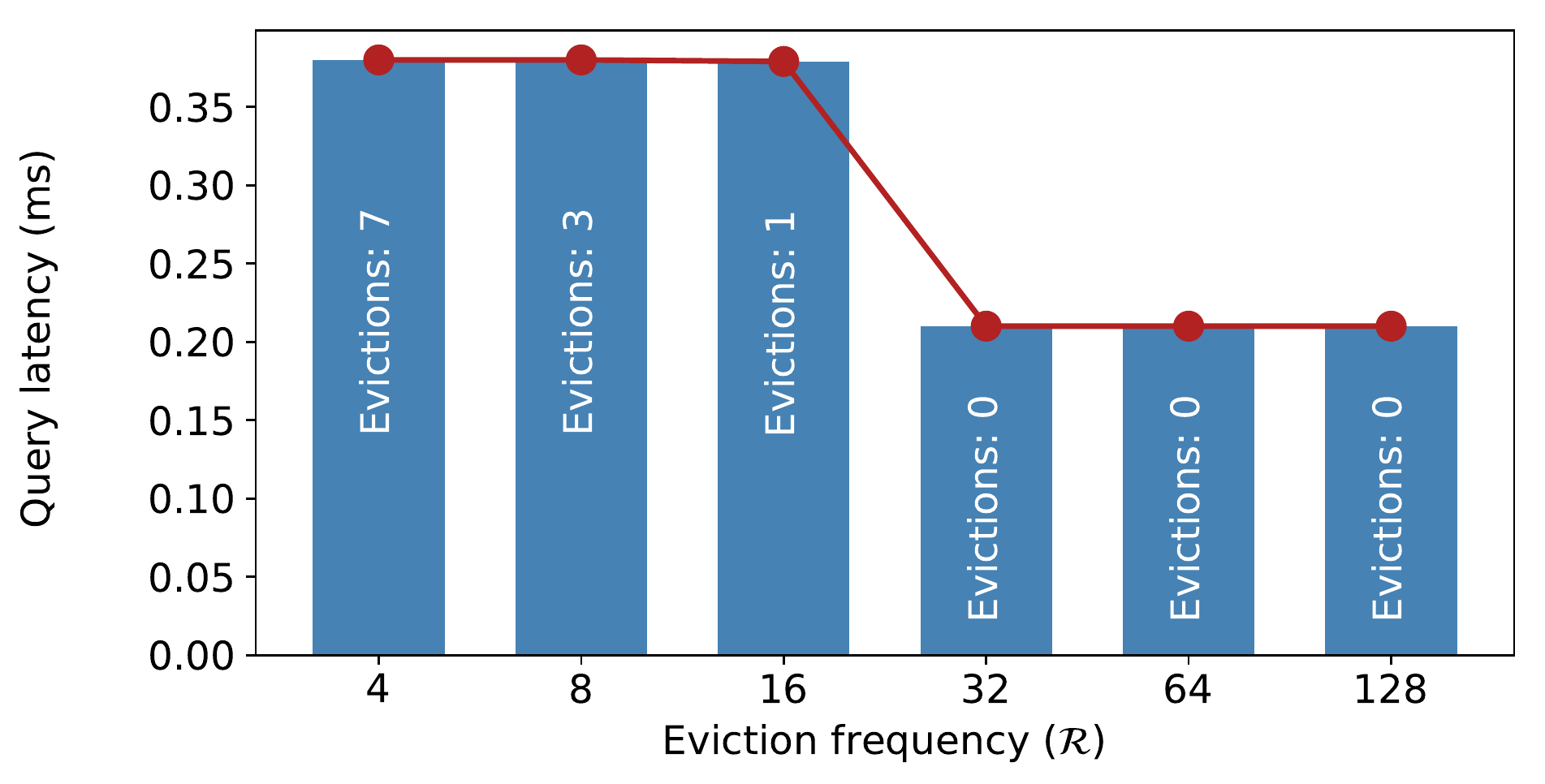}
    \caption{Query latency}
    \label{fig:eviction_frequency}
  \end{subfigure}
%   \vspace{-10pt}
  \caption{A) Query latency breakdown with different query ranges ($R_T$). SGX overhead includes context switch and memory allocation inside the enclave. B) Query latency with different eviction frequencies ($\mathcal{R}$) and query range = 32 intervals.}
  \label{fig:results_query_latency}
  \vspace{-15pt}
\end{figure}

\noindent\textbf{Eviction frequency}. 
\fname{} evicts the blocks from the stash for every $\mathcal{R}$ read-operation. Fig. \ref{fig:eviction_frequency} demonstrates how $\mathcal{R}$ affects the query latency. With a fixed query range of 32, the query has a negligible higher latency when the value of $\mathcal{R}$ is smaller than the query range, i.e, $t_b - t_a \leq \mathcal{R}$. However, the latency drops significantly when $\mathcal{R}$ is larger than the query range. This is because \oramname{} performs read-operations from the ORAM with less evictions.

\noindent\textbf{Aggregation Intervals}
\label{eval:query_optimisation}.
In Fig.~\ref{fig:aggregation_intervals}, we show how aggregation levels reduce query latency in \fname{}. We set $T=1$s as the baseline in this evaluation, i.e., \fname{} generates 1 block per second. Noticeably, query latency decreases with larger aggregation intervals as less accesses occur to the ORAM. When $T=20$s, the speedup achieved in the query latency is $67\times$ with $20\times$ less memory consumption (compared to $T=1$s). Note that \fname{} supports multiple aggregation levels by maintaining a separate ORAM tree for each level. Despite the fact that \fname{} consumes less memory for higher aggregation levels, such an overhead can be neglected due to the large EPC size in SGX v2.

\begin{figure}
    \vspace{-10pt}
\centering
    \includegraphics[width=0.7\linewidth]{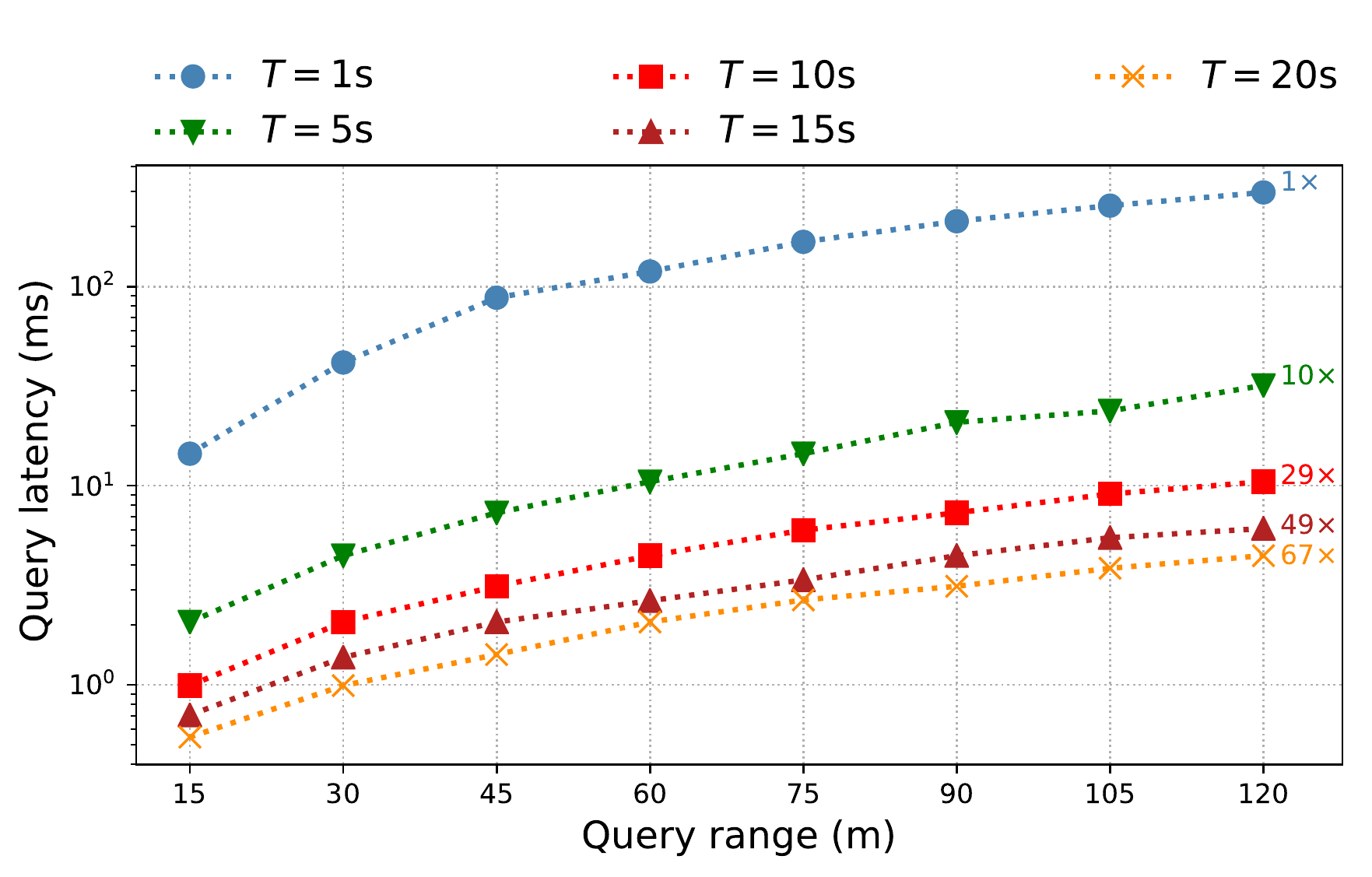}
    \caption{Query latency of different aggregation intervals.}
    \label{fig:aggregation_intervals}
    \vspace{-20pt}
\end{figure}

\noindent \textbf{Comparison with baselines}. 
Fig. \ref{fig:baseline_comparison} illustrates how \fname{}'s query latency compares to the baselines. The latency grows linearly with the query range for \fname{} and the baselines. For point queries, the latency overhead of \fname{} and ORAM baseline compared to the non-oblivious \fname{} is $1.5\times$ and $2.5\times$ respectively. \oramname{} achieves higher performance than the ORAM baseline due to its non-blocking read operations and efficient batch eviction design. For range queries, \fname{} and ORAM baseline adds up to $12\times$ overhead to the query latency compared to non-oblivious \fname{}. As discussed earlier, the majority of the overhead is due to the ORAM access and the oblivious operations in \fname{} and ORAM baseline. However, \fname{} remains substantially faster than the ORAM baseline by $1.7-2\times$ for both range and point queries.

\begin{figure}
\vspace{-10pt}
  \centering
  \hfil
  \begin{subfigure}{0.49\textwidth}
    \centering
    \includegraphics[width=1\linewidth]{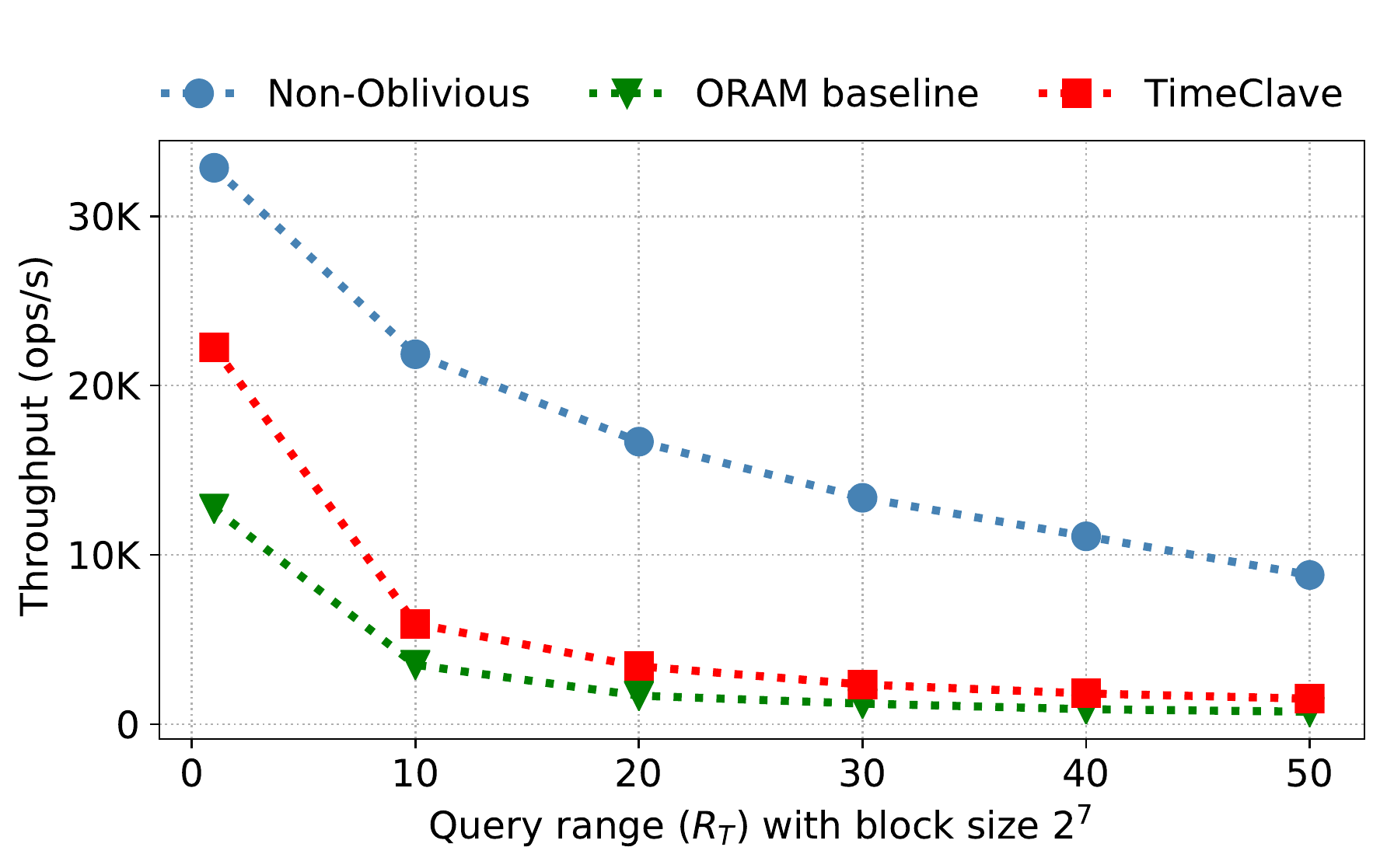}
    \caption{Query throughput}
    \label{fig:query_throughput}
  \end{subfigure}
  \hfil
  \begin{subfigure}{0.49\textwidth}
    \centering
    \includegraphics[width=1\linewidth]{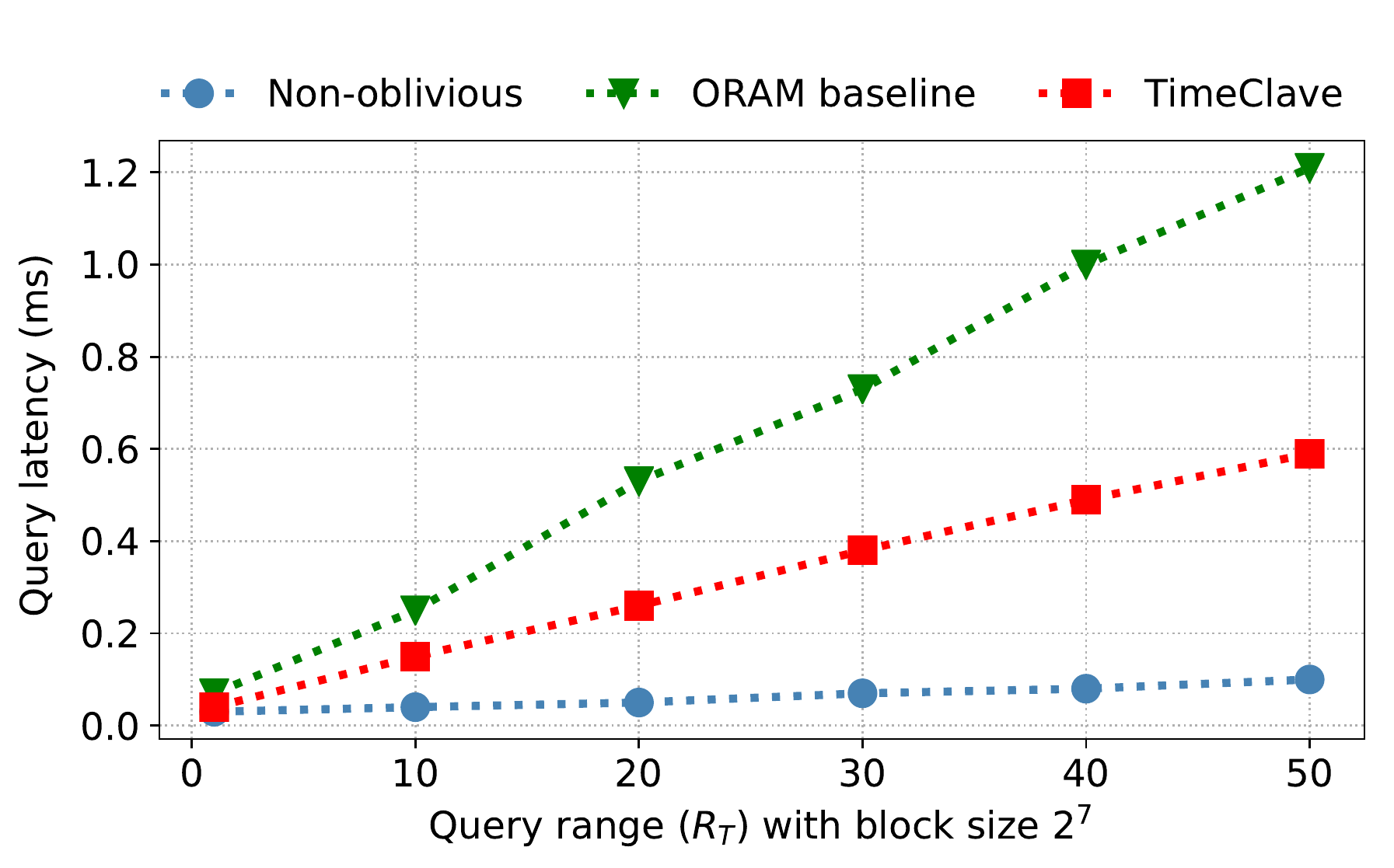}
    \caption{Query latency}
    \label{fig:baseline_comparison}
  \end{subfigure}
%   \vspace{-10pt}
  \caption{\fname{} query latency and throughput compared to baselines.}
  \label{fig:results_query_throughput}
  \vspace{-20pt}
\end{figure}

\subsubsection{\textbf{Query Throughput}}

In Fig. \ref{fig:query_throughput}, we compare \fname{} query throughput with the baselines. For point queries, \fname{} achieves up to 22K ops/s compared to the ORAM baselines which achieves up to 12K ops/s. For range queries, \fname{} achieves higher throughput than ORAM baseline by up to $2\times$, i.e., 1.5K ops/s for $R_T \geq  20$. Similar to the query latency both \fname{} and ORAM baseline add up to $6\times$ overhead compared to non-oblivious \fname{}. As mentioned in Section \cref{sec:query_latency}, the majority of the overhead is caused by ORAM access and oblivious operations. Such overhead can be reduced in \fname{} by maintaining multiple aggregation intervals, which reduces the number of accessed paths in \oramname{}. For this, we avoid large query ranges in our evaluation as the main goal of \fname{} is to summarise TSD and maintain multiple aggregation intervals to achieve low query latency.

\subsubsection{\textbf{Memory Consumption}}
\begin{table}
    \renewcommand{\arraystretch}{1.3}
    \scriptsize
    \vspace{-35pt}
    \centering
        \caption{Memory consumption using different aggregation intervals with the expected query latency speedup when the same time interval is queried.}
        \vspace{0.2cm}
        \label{tbl:memory_consumption}
    \begin{tabularx}{1\textwidth}{p{3cm} | p{4cm} p{4cm}}
        \hline 
        % \\
        $V$ & Memory (MB) & Speedup
        \\
        \hline 
        % \\
        $[10s]$ & 5.0 & up to $1\times$
        \\
        $[10s, 60s]$ & 5.6 & up to $6\times$
        \\
        $[10s, 120s]$ & 5.3 & up to $12\times$
        \\
        \hline
    % \vspace{-10pt}
\end{tabularx}
\end{table}

\oramname{} uses two separate ORAM trees to handle non-blocking queries. As a result, it consumes twice the data as the ORAM baseline. In detail, \fname{} maintains two separate ORAM trees to store the data blocks, two stashes, two position maps, and the list of accessed paths. 
For example, assume $T=10s$, $\mathcal{R}=4$ and $B=40$ bytes.%(See Table \ref{tbl:memory_consumption}).
 To store 24h of time-series data, \fname{} will generate $N=8,640$ blocks. Recall from \cref{sec:storage}, \oramname{} initialises two ORAM trees, each tree requires $2^{L+1} -1 \cdot Z \cdot B$ bytes., i.e., $\sim$5MB for both trees to store a single attribute (e.g., CPU, memory). To maintain multiple aggregation intervals, e.g., $V=[10s, 60s]$, \oramname{} will consume $\sim$5.6MB while achieving up to $6x$ query speedup with only $0.12\times$ memory overhead. Although \fname{} offers a low memory overhead, \fname{} utilises the new generation of SGX-enabled processors support up to 1TB of enclave memory \cite{intelsgx1tb}.

\subsubsection{\textbf{\fname{} Compared to Cryptographic Approaches}}
We evaluate \fname{} against the cryptographic solutions TimeCrypt \cite{burkhalter2020timecrypt}, which supports aggregate functions over encrypted time-series data.
Without query optimization, \fname{} achieves up to $16x$ lower query latency when the number of queried blocks is below 6,000. TimeCrypt exhibits an almost constant latency of around 185$ms$. However, with query optimization using multiple aggregations intervals, \fname{} demonstrates a significant improvement in performance in comparison to TimeCrypt by orders of magnitude. Although the query latency for \fname{} increases with the number of queried blocks, querying 8,000 blocks results in a query latency that is approximately $200x$ lower than TimeCrypt.

Without query optimisation, \fname{} achieves better performance when the number of queried blocks is smaller than 6,000, while TimeCrypt shows almost a constant latency of around 185$ms$. Nevertheless, it is clear that \fname{} can achieve up to $16x$ better performance than TimeCrypt without the query optimiser when the number for small range queries. With query optimisation, \fname{} achieves better performance than TimeCrypt by orders of magnitude. Although the query latency for \fname{} increases with the number of queried blocks, querying 8,000 blocks has around $200x$ lower query latency.

%%%%% Related Work %%%%%

\section{Related work}\label{sec:related_work}

\subsection{Secure Time Series Processing}
Cryptographic protocols have been widely adopted in building secure databases \cite{demertzis2020seal, papadimitriou2016big, poddar2016arx} to execute expressive queries on encrypted data, while another line of work leverages SGX, such as Oblidb \cite{eskandarian2017oblidb}, EncDBDB \cite{fuhry2021encdbdb}, EnclaveDB \cite{priebe2018enclavedb} and Oblix \cite{mishra2018oblix}. However, these solutions either incur significant performance overhead or are not optimised for time series processing. The most related works to \fname{} in the secure time series processing systems are TimeCrypt \cite{burkhalter2020timecrypt}, Zeph and Waldo \cite{dauterman2021waldo}. TimeCrypt and Zeph employ additive homomorphic encryption to support aggregated queries on encrypted data. However, both solutions are non-oblivious, allowing the adversary to learn sensitive information by recovering search queries or a portion of the encrypted records \cite{zhang2016all,liu2014search}. On the other hand, Waldo \cite{dauterman2021waldo} offers a stronger security guarantee than \cite{burkhalter2020timecrypt,dauterman2021waldo} by hiding query access patterns. As Waldo adopts MPC, the network bandwidth adds significant overhead to the query latency. \fname{} eliminates such overhead while providing fully-oblivious query processing.
Unlike previous solutions, \fname{} can be easily scaled to support complex analytics as it processes TSD in plaintext inside the enclave.

\subsection{ORAM with Intel SGX}
Another line of prior work has explored and combined SGX and ORAM to build secure storage.
For example, ZeroTrace \cite{sasy2017zerotrace} develops a generic oblivious block-level ORAM controller inside the enclave that supports multiple ORAMs.
Additionally, ZeroTrace focuses on hiding memory access patterns inside the enclave while leaving ORAM storage outside the enclave. Similarly, Oblix \cite{mishra2018oblix} builds an oblivious search index for encrypted data by using SGX and Path ORAM. Oblix designs a novel data structure (ORAM controller) to hide access patterns inside the enclave.
Likewise, Oblix stores the ORAM storage on the server in unprotected memory (outside the enclave).
Moreover, Obliviate \cite{ahmad2018obliviate} and POSUP \cite{hoang2019hardware } adopt SGX and ORAM to develop a secure and oblivious filesystem to read and write data from a file within an enclave. Obliviate is optimised for ORAM write operations by parallelising the write-back process to improve performance.
MOSE \cite{hoang2020mose} adopts Circuit-ORAM for a multi-user oblivious storage system with access control. Like previous solutions, MOSE stores the ORAM controller inside the enclave while leaving the ORAM tree outside. Although MOSE parallelises the ORAM read process, clients' queries are blocked until the accessed blocks are evicted.
Unlike \fname{}, previous solutions are not optimised for handling multi-user, non-blocking queries in the time series context.

\subsection{Plaintext Time Series Processing}
Many solutions  have been proposed to store and process TSD in the plaintext domain, to name a few, such as Amazon TimeStream \cite{amazontimestream}, InfluxDB \cite{influxdb}, Monarch \cite{adams2020monarch},  Gorilla \cite{pelkonen2015gorilla}. These solutions focus mainly on delivering low-latency real-time queries and efficient storage. For example, Gorilla \cite{pelkonen2015gorilla} is an in-memory database for TSD that is optimised for heavy read and write operations. It introduces a new compression scheme for timestamps and floating point values for efficient TSD storage and achieves up to $70\times$ lower query latency than on-disk storage. Monarch \cite{adams2020monarch} is used to monitor and process time series data in a geo-distributed architecture. In addition, it supports in-memory data aggregation for higher query performance. Unfortunately, adopting these solutions for sensitive time series systems on cloud platforms can lead to critical data breaches \cite{information2020world}.

%%%%% Conclusion %%%%%

\section{Conclusion and Future Work}
In this work, we presented \fname{}, a secure in-enclave time series processing system. 
While previous works \cite{dauterman2021waldo, burkhalter2020timecrypt} adopt cryptographic protocols, \fname{} leverages Intel SGX to store and process ttime series data efficiently inside the enclave.
To hide the access pattern inside the enclave, we introduce an in-enclave read-optimised ORAM named \oramname{} capable of handling non-blocking client queries. 
\oramname{} decouples the eviction process from the read/write operations.
\fname{} achieves a lower query latency of up to $2.5\times$ compared to our ORAM baseline and up to  $5.7-12\times$ lower query latency than previous works.

While \fname{} supports a wide range of aggregate functions that are supported by time series processing systems \cite{amazontimestream,influxdb, timescale}, it does not support all the functionalities (e.g., filtering, grouping, and joining). Note that \fname{} can be easily extended to support such functionalities and expressive queries. This is because \fname{} stores and processes data inside the enclave in plaintext, allowing it to perform complex operations efficiently. However, such a direction requires careful implementation and optimisation to maintain \fname{}'s obliviousness inside the enclave. Finally, \fname{} can further improve its query performance by parallelising each ORAM read access, which can be a valuable direction for future work.

\section*{Acknowledgement}

The authors would like to thank the anonymous reviewers for their valuable comments and constructive suggestions. The work was supported in part by the ARC Discovery Project (DP200103308) and the ARC Linkage Project (LP180101062).

\bibliographystyle{IEEEtran}
\bibliography{main}

\end{document}